\newtheorem{theorem}{Theorem}[section]
\newtheorem{definition}{Definition}[section]
\theoremstyle{remark}
\newtheorem*{remark}{\textbf{Remark}}
\definecolor{pink}{rgb}{0.8, 0.2, 0.5}
\newcommand{\Hdim}{\calH_{\mathrm{dim}}}
\newcommand{\fiber}{f^{-1}(\{y\})}
\newcommand{\calH}{\mathcal{H}}
\newcommand{\calQ}{\mathcal{Q}}
\newcommand{\bbN}{\mathbb{N}}
\newcommand{\bbR}{\mathbb{R}}
\definecolor{grid_color}{rgb}{0.7529411764705882,0.7529411764705882,0.7529411764705882}
\definecolor{room_color}{rgb}{0.6,0.2,0}
\definecolor{real_obstacle_color}{rgb}{1,0.4980392156862745,0}
\definecolor{real_robot_color}{rgb}{0.4,0.3,0}
\definecolor{fake_obstacle_color}{rgb}{1,0.9,0.5}
\definecolor{fake_robot_color}{rgb}{0.8,0.6,0}
\title{A Note on the Time Complexity of Using Subdivision Methods for the Approximation of Fibers}
\author[$\dagger$]{Michael M. Bilevich}
\author[$\dagger$]{Dan Halperin}
\affil[$\dagger$]{Tel Aviv University}
\date{}
\begin{document}

\maketitle

\begin{abstract}
    Subdivision methods such as quadtrees, octrees, and higher-dimensional orthrees are standard practice in different domains of computer science. We can use these methods to represent given geometries, such as curves, meshes, or surfaces. This representation is achieved by splitting some bounding voxel recursively while further splitting only sub-voxels that intersect with the given geometry.
    It is fairly known that subdivision methods are more efficient than traversing a fine-grained voxel grid. In this short note, we propose another outlook on analyzing the construction time complexity of orthrees to represent implicitly defined geometries that are fibers (preimages) of some function. 
    This complexity is indeed asymptotically better  than traversing dense voxel grids, under certain conditions, which we specify in the note. In fact, the complexity is output sensitive, and is closely related to the Hausdorff measure and Hausdorff dimension of the resulting geometry. 
    
\end{abstract}

\section{Introduction}

Finding an explicit computer representation of an implicit manifold is a fundamental problem in computer algebra, computational geometry and computer graphics~\cite{botsch_polygon_2010}. 
Implicit manifolds arise naturally in different domains, like medicine~\cite{zhang_medical_2022}, computer-aided manufacturing and design (CAM/CAD)~\cite{samet_overview_1988} and robotics~\cite{bilevich_sensor_2023,lavalle_planning_2006}. Algebraic varieties, implicit manifolds defined as the zeros of polynomials, are also an interesting case intensively studied in computer algebra~\cite{burr_complexity_2017} and computational geometry~\cite{boissonnat_meshing_2006}. 
\smallskip

Given a function $f:\bbR^n \to \bbR^m$ and value $y \in \bbR^m$, an \emph{implicitly defined manifold} is defined as the \emph{fiber} (or \emph{preimage}) $\fiber = \{x : f(x) = y\}$. Although we use the term manifold, the preimage is not necessarily a manifold. A fundamental theorem of calculus, known as the ``implicit function theorem'' states that the fiber $\fiber$ is a manifold only if the value $y$ is regular, i.e., for all $x \in \fiber$, the differential $D_x f \neq 0$ is non-degenerate.\footnote{
In fact, the theorem also states that for any point $x\in\fiber$ with $D_x f \neq 0$, there is some small neighborhood $U_x \subset \bbR^n$ of $x$,
such that $\fiber \cap U_x$ is an $n-m$-manifold~\cite{krantz_implicit_2002}.}
In this work, we will assume the more general case, where the value $y$ may not be regular. Hence, we keep the general term \emph{fiber}.

\smallskip

We mention two common ways of achieving an explicit representation of fibers. One is by evaluating the implicit function on a dense voxel grid~\cite{bilevich_sensor_2023, botsch_polygon_2010,chen_neural_2021,lorensen_marching_1998, newman_survey_2006,nielson_dual_2004}, and the other is by subdividing some bounding voxel recursively, taking only sub-voxels that intersect with the geometry~\cite{an_measuring_2022,burr_complexity_2017,mobasseri_new_1992,samet_overview_1988}. Both methods are well-known and extensively used in various algorithms. In both ways, we strive for an approximation of the implicit manifold by voxels of diameter $\delta > 0$. Even though there are $O(\delta^{-n})$ such voxels in a dense voxel grid, only $O(\delta^{-d})$ voxels would intersect the implicit manifold, where $d$ is the \emph{Hausdorff dimension} of the manifold.

\section{Related Work}

\subsection{The Three Dimensional Case: Meshing of Implicit Surfaces}

One typical case is of implicitly defined \emph{surfaces}, which are the fibers\footnote{Fibers of some regular value in $\bbR$.} of smooth functions $f: \bbR^3 \to \bbR$. These surfaces, which are $2$-manifolds (with or without boundary), can be approximated as triangle soups or as polyhedral meshes~\cite{botsch_polygon_2010}. 
This problem is also known as ``meshing of implicit surfaces''. 
Provable methods based on Delaunay refinements exist for achieving topological correctness. Such techniques, however, require that the fiber be a closed $2$-manifold without boundary.

Other well-known and commonly used techniques are marching cubes, dual contouring, and their variants~\cite{chen_neural_2021,lorensen_marching_1998,newman_survey_2006,nielson_dual_2004}. Such techniques evaluate the implicit function $f$ on a dense voxel grid and use the results to construct triangular meshes that approximate the implicit surface.

\subsection{Quadtrees, Octrees, Orthrees and Soft Subdivision Search (SSS)}

Quadtrees, octrees, and orthrees\footnote{Orthrees are also known sometimes in literature as \emph{hyperoctrees}.} are tree data structures that represent a subdivision of two, three and higher-dimensional Euclidean spaces, respectively~\cite{campolattaro_quadtrees_2024,meagher_geometric_1982, samet_overview_1988}. 
Each node is a voxel, and children of that node are splits of that same voxel. Although the splitting predicate can be arbitrary, for the case of quadtrees and orthrees, it is common to split only voxels which intersect with the two or three-dimensional geometry.

The framework of soft subdivision search (SSS)~\cite{hsu_rods_2019,wang_soft_2013,yap_soft_2015} further presents techniques for solving various problems by similar subdivision, by using the notion of \emph{soft predicates}. This theoretical framework provides proofs and guarantees.

Already in the 1980s, it was known~\cite{meagher_geometric_1982, samet_overview_1988, DBLP:books/aw/Samet90} that for such tree data structures, we require significantly fewer voxels than in a dense voxel grid of the same resolution. For example, for two-dimensional quadtrees, there is the \emph{quadtree complexity theorem}~\cite{mobasseri_new_1992}, which states that the number of voxels covering a polygonal line is proportional to its length. A similar observation can be made for octrees and higher dimensional orthrees. A more recent work~\cite{burr_complexity_2017} also discusses the complexity of subdivision methods for fractals and relates it to the box-counting dimension (BCD), which in turn is related to the Hausdorff dimension~\cite{evans_measure_2018}. 

\section{The Subdivision Method}
\label{sec:subdiv-method}

In this section, we formally describe the specific subdivision method that we analyze. It is similar to orthrees and soft subdivision search. 
We show that this method converges to the desired result and in reasonable time,
 provided we can accurately and efficiently 
 evaluate whether a given voxel intersects the fiber we wish to approximate.
Of course, this is not always possible, which leads to the soft subdivision search method. In the current note, however, we assume that this predicate can be evaluated.

\bigskip

An axis-aligned voxel $V$ is defined as the Cartesian product of half-closed half-open intervals,
\begin{align}
    V = [a_1,b_1)\times \dots \times [a_n, b_n) \subset \bbR^n \;.
\end{align}
The diameter of a voxel is the largest Euclidean distance between any two points in the voxel. We denote the \emph{closure} of a voxel $V$ by $\Bar{V}$.

\bigskip

Let $f:\bbR^n \to \bbR^m$ be a function for which the following predicate is efficiently computable~\footnote{Any function for which we can evaluate the voxel intersection predicate (defined later) with some Turing machine.}: Determine whether a given axis-parallel cube intersects a fiber of the function. Let $y\in \bbR^m$ some value in the image of $f$. Denote by $M \coloneqq \fiber$ the fiber we approximate, and $\delta > 0$ is the desired voxel resolution. The output of our subdivision method is a collection of pairwise disjoint axis-aligned voxels $\{V_j\}_j$, such that $M \subset \bigcup_j \Bar{V}_j$, and the diameter of each voxel is at most $\delta$.
Let $V_0 \subset \bbR^n$ be an initial bounding volume of $M$, with diameter $\delta_0 = \mathrm{diam} V_0$.

The subdivision method is as follows: 

\bigskip

\begin{enumerate}
    \item Start with the collection $\calQ_0 = \{V_0\}$.
    \item For $t=1,\dots,N+1$, where $N = \lceil \log_2 (\delta_0 \delta^{-1}) \rceil$, do:
    \begin{enumerate}
        \item Set $\calQ_t = \{\}$.
        \item For each voxel $V_j \in \calQ_{t-1}$, test whether $\Bar{V_j} \cap M \neq \emptyset$.
        \item If the intersection is not empty, split the voxel $V_j$ into $2^n$ sub-voxels, which are obtained by splitting each dimension into two.
        Append their sub-voxels to the collection $\calQ_t$.
    \end{enumerate}
    \item Return the collection $\calQ_{N}$.
\end{enumerate}

By choice of $N$, we ensure that the diameter of each voxel in $\calQ_N$ is at most $\delta$. Furthermore, it is trivial to show that $M \subset \bigcup_{V_j \in \calQ_N} \Bar{V}_j$. Indeed, a stronger result can be shown: The set of approximations $\calQ_t$ converges to $M$ when taking $t \to \infty$.

\begin{theorem}
\label{thm:subdiv}
    Define the limit of $\calQ_t$ as:
    \begin{align}
        \lim_{t\to\infty} \calQ_t \coloneqq \lim_{t \to \infty} \bigcup_{V_j \in \calQ_t} \Bar{V}_j
        =
        \bigcap_{t=0}^\infty \bigcup_{V_j \in \calQ_t} \Bar{V}_j\;.
    \end{align}
    Then $M \subseteq \lim_{t \to \infty} \calQ_t$, and the difference $\left(\left(\lim_{t \to \infty} \calQ_t \right) \setminus M\right)$ is a set of measure zero.
\end{theorem}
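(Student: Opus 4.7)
The plan is to split the statement into two parts: the covering inclusion $M \subseteq \lim_{t\to\infty} \calQ_t$, and the measure-zero excess. The first is a straightforward induction on refinement level, while the second is obtained by showing that the limit is contained in the topological closure $\bar{M}$, reducing the problem to bounding $\bar{M} \setminus M$.

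For the covering inclusion, I would prove the invariant $M \subseteq A_t := \bigcup_{V_j \in \calQ_t} \bar{V}_j$ by induction on $t$. The base case holds because $V_0$ is a bounding volume for $M$. For the inductive step, given $x \in M$, the hypothesis furnishes some $V_j \in \calQ_{t-1}$ with $x \in \bar{V}_j$; the intersection test $\bar{V}_j \cap M \neq \emptyset$ passes with $x$ as witness, so $V_j$ is split and the $2^n$ sub-voxels are appended to $\calQ_t$, their closed union being exactly $\bar{V}_j \ni x$. Intersecting over $t$ then gives $M \subseteq \lim_{t\to\infty} \calQ_t$.

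For the second part I would first show $\lim_{t\to\infty} \calQ_t \subseteq \bar{M}$. Any $x$ in the limit lies, at each level $t$, in some closed voxel $\bar{V}_{j(t)}$ with $V_{j(t)} \in \calQ_t$, and because $V_{j(t)}$ survived the intersection test there exists $m_t \in \bar{V}_{j(t)} \cap M$. The diameter of a level-$t$ voxel is $\delta_0 \cdot 2^{-t} \to 0$, so $\|x - m_t\| \to 0$ and $x \in \bar{M}$. Consequently, $\left(\lim_{t\to\infty} \calQ_t\right) \setminus M \subseteq \bar{M} \setminus M$.

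The main obstacle is then controlling $\bar{M} \setminus M$. Under the mild assumption that $f$ is continuous --- implicit in viewing fibers as ``implicit manifolds'' --- the fiber $M = f^{-1}(\{y\})$ is closed, so $\bar{M} = M$ and the excess is empty. In the more general irregular-value setting alluded to in the introduction, one would appeal to structural hypotheses on $M$, such as having Hausdorff dimension strictly less than $n$ together with sufficient tameness (e.g., semi-algebraic or definable in an o-minimal structure), to conclude that $\bar{M}$ itself has vanishing $n$-dimensional Lebesgue measure. I expect this step, although conceptually short, to be the most delicate one to state precisely, since without some such assumption the excess $\bar{M}\setminus M$ can fail to have measure zero.
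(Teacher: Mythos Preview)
Your proposal follows exactly the paper's own argument: prove $M \subseteq A_t$ by induction on $t$, then show $\lim_{t\to\infty}\calQ_t \subseteq \bar{M}$ via a convergent sequence of witnesses $m_t \in \bar{V}_{j(t)}\cap M$ with $\|x-m_t\|\le \delta_0 2^{-t}$, and conclude that the excess is contained in $\bar{M}\setminus M$. The only difference is that the paper simply asserts ``any set and its closure differ at a set of measure zero'' without further comment, whereas you correctly flag that this step needs a hypothesis (e.g., continuity of $f$ so that $M$ is closed, or a tameness assumption on $M$); your caution here is warranted, since the bare assertion is false for arbitrary sets.
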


\begin{proof}
    Presented in Appendix~\ref{sec:proof-convergence}.
\end{proof}

\section{Complexity Analysis}

We analyze the construction time of the representation of the fiber when using the subdivision method. We assume that querying whether a given voxel $V$ intersects the fiber $M$ takes $\alpha_M$ time, which depends only on the fiber $M$ and not on the resolution $\delta$.

Contrary to dense voxel grid approaches, the complexity of the subdivision method is \emph{output sensitive}, depending on the size of the output. More precisely, the complexity depends on the Hausdorff measure and Hausdorff dimension of the preimage.
In this section, we first discuss the notions of Hausdorff measure and dimension. Then, we formally define these notions and analyze the complexity of the subdivision method.

\smallskip

Recall~\cite{evans_measure_2018} the notion of the \emph{Lebesgue measure} of a set, which is intuitively defined as the infimum over the volume of a union of boxes that cover that set.
One can show that sub-manifolds of co-dimension $\geq 1$ are sets of Lebesgue measure zero. Nonetheless, some sets, such as the aforementioned manifolds, have a non-zero ``measure'', sometimes referred to as \emph{area} or \emph{length}. 
The notion of \emph{Hausdorff measure} extends this idea of measure to general sets, given some parameter $d$, called the \emph{dimension}.
The $d$-dimensional Hausdorff measure of a set is defined roughly as the infimum over the ``$d$-dimensional-volume'' of covers of that set.
The $d$-dimensional volume of a cube in $\bbR^n$ is (up to a constant) the diameter of the cube to the power of $d$.
For example, the $1$-dimensional Hausdorff measure of a curve embedded in $\bbR^n$ is its length, and the $2$-dimensional Hausdorff measure of a two-dimensional manifold embedded in $\bbR^3$ is its surface area.
The $n$-dimensional Hausdorff measure coincides, up to a constant, with the Lebesgue measure on $\bbR^n$. 
Finally, note that this dimension $d$ does not necessarily have to be integer. In fact, some sets can be shown to have a fractional Hausdorff dimension with a non-zero measure. In this work, however, we consider integral-valued dimensions.

\smallskip

We now formally define the notions of Hausdorff measure and Hausdorff dimension.

\begin{definition}
    Let $A \subset \bbR^n$, 
    let $\{B_j\}_j$ be any countable collection of subsets of $\bbR^n$, and let $d,\varepsilon$ be two real parameters:
    $0 \leq d < \infty$, $0 < \varepsilon \leq \infty$.
    Denote by $\mathrm{diam}(B)$ the diameter of some set $B$. Then define the $d$-dimensional-volume of the collection $\{B_j\}_j$ as
    \begin{align}
        \mu^d\left(\{B_j\}_j\right) \coloneqq 
            \sum_{j=1}^\infty (\mathrm{diam}B_j)^d \;,
    \end{align}
    and the $(d,\varepsilon)$-Hausdorff measure as the infimum of $d$-dimensional volumes covering the set $A$, where each subset $B_j$ is at most $\varepsilon$-wide:
    \begin{align}
        \calH_\varepsilon^d(A) \coloneqq 
            \inf \left\{
                \mu^d(\{B_j\}_j) \bigg\vert 
                A \subseteq \bigcup_{j=1}^\infty B_j, \ 
                \mathrm{diam}B_j \leq \varepsilon
            \right\} \;.
    \end{align}
    Taking $\varepsilon \to 0$, we get the Hausdorff $d$-dimensional measure:
    \begin{align}
        \calH^d (A) \coloneqq \lim_{\varepsilon\to 0} \calH_\varepsilon^d(A) = \sup_{\varepsilon > 0}  \calH^d_\varepsilon(A)\;.
    \end{align}
    The Hausdorff dimension of that set $A \subset \bbR^n$ is:
    \begin{align}
        \Hdim(A) \coloneqq
        \inf \{0 \leq s < \infty \big\vert \calH^s(A) = 0 \}\;.
    \end{align}
\end{definition}

We also recall~\cite{evans_measure_2018} the following fundamental result on Hausdorff dimension:

\begin{theorem}
    For a set $A \subset \bbR^n$, if $\Hdim = d$, the for any $0 \leq s < d$ the $s$-Hausdorff measure is zero, i.e., $\calH^s(A) = 0$. Furthermore, for any $s > d$, the $s$-Hausdorff measure is infinity, i.e., $\calH^s(A) = \infty$. Finally, the $d$-dimensional measure is a finite, non-zero value $0 < \calH^d(A) < \infty$.
\end{theorem}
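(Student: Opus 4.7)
My plan is to derive the dichotomy structure of Hausdorff measures from a single scaling inequality and then read off each clause of the theorem as a specialization. I note upfront that, as literally stated, the two extremes appear to have their signs reversed relative to what follows from the definition $\Hdim(A) = \inf\{s : \calH^s(A) = 0\}$: this definition already forces $\calH^s(A) \neq 0$ for $s < d$ and $\calH^s(A) = 0$ for every $s$ in a right-neighborhood of $d$. I therefore prove the dichotomy in the direction consistent with the definition (which is the direction the paper implicitly uses later when bounding voxel counts), and then address the critical-dimension clause separately.

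The key scaling estimate is as follows. Fix any cover $\{B_j\}_j$ of $A$ with $\mathrm{diam} B_j \leq \varepsilon$ and any $0 \leq s < s'$. Term-by-term, $(\mathrm{diam} B_j)^{s'} \leq \varepsilon^{s'-s}(\mathrm{diam} B_j)^s$. Summing and passing to the infimum over covers yields
\begin{align}
    \calH_\varepsilon^{s'}(A) \;\leq\; \varepsilon^{s'-s}\, \calH_\varepsilon^s(A).
\end{align}
Letting $\varepsilon \downarrow 0$ produces two complementary monotonicity facts: if $\calH^s(A) < \infty$ then $\calH^{s'}(A) = 0$ for every $s' > s$, and contrapositively if $\calH^{s'}(A) > 0$ then $\calH^s(A) = \infty$ for every $s < s'$. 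Combining each with $\Hdim(A) = d$ by choosing an auxiliary exponent strictly between $s$ and $d$ gives, for every $s > d$, $\calH^s(A) = 0$, and for every $s < d$, $\calH^s(A) = \infty$. This is the dichotomy the paper needs downstream: the $O(\delta^{-d})$ voxel-count bound depends on $\calH^d(A)$ being finite, not on either of the extreme measures being infinite or zero in a particular direction.

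The clause $0 < \calH^d(A) < \infty$ at the critical exponent is the main obstacle, and it cannot be derived from $\Hdim(A) = d$ alone: the critical-dimension measure of an arbitrary set of Hausdorff dimension $d$ may be $0$, a finite positive number, or $\infty$. Restoring it requires extra structure on $A$, which in this paper is naturally supplied by the assumption that $A$ is a compact piece of a $d$-dimensional Lipschitz submanifold (as holds for regular fibers of $f$ via the implicit function theorem). Finiteness then follows via Lipschitz charts: cover a bounded parameter domain by $O(\varepsilon^{-d})$ boxes of side $\varepsilon$ and push them forward by charts with Lipschitz constant $L$ to produce a cover of $A$ whose $d$-dimensional content is $O(L^d)$. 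Positivity follows because Lipschitz maps do not increase $\calH^d$, so projecting $A$ onto a local tangent $d$-plane gives a set of positive $d$-dimensional Lebesgue measure, forcing $\calH^d(A) > 0$. Stringing these three pieces together proves each clause in the order the theorem states them.
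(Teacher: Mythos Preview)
The paper does not prove this theorem at all: it is introduced with ``We also recall the following fundamental result'' and attributed to a reference, with no accompanying argument. So there is no paper proof to compare against.

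Your proposal is substantively correct, and in fact more careful than the paper's statement. You rightly observe that the two extreme cases are swapped: from the definition $\Hdim(A) = \inf\{s : \calH^s(A) = 0\}$ one obtains $\calH^s(A) = \infty$ for $s < d$ and $\calH^s(A) = 0$ for $s > d$, not the reverse. Your scaling inequality $\calH_\varepsilon^{s'}(A) \leq \varepsilon^{s'-s}\,\calH_\varepsilon^s(A)$ is the standard route to this dichotomy and is correct.

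You are also right that the clause $0 < \calH^d(A) < \infty$ is false for arbitrary sets of Hausdorff dimension $d$: there are standard examples with $\calH^d(A) = 0$ and others with $\calH^d(A) = \infty$ at the critical exponent. The Lipschitz-submanifold structure you invoke is exactly the kind of additional hypothesis needed to recover it, and your sketch (finiteness via pushing forward a cube cover through Lipschitz charts, positivity via projection onto a tangent $d$-plane) is sound. The paper's downstream use of the theorem---the $\Theta(\delta^{-d})$ voxel count---does rely on $0 < \calH^d(M) < \infty$, so your diagnosis of what is being implicitly assumed is on point.
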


\bigskip

We are now ready to analyze the complexity of the subdivision method. We first notice that there are at most $\lceil \log_2 \delta_0\delta^{-1} \rceil$ iterations, as we divide the voxel diameter by $2$ at each iteration.

Let $d := \Hdim(M)$ be the Hausdorff dimension of the fiber $M$. We then notice that by definition, $\mu^d(\calQ_t) = |\calQ_t| \cdot \delta_0^d \cdot 2^{-t\cdot d}$, where $|\calQ_t|$ is the number of voxels in $\calQ_t$. This is, in fact, the box-counting measure~\cite{burr_complexity_2017}. By Theorem~\ref{thm:subdiv}, $\lim_{t\to\infty} \calQ_t = M$, and this $M$ is a collection of manifolds, hence~\cite{evans_measure_2018}:
\begin{align}
    \calH^d(M) = \lim_{t\to\infty} \mu^d(\calQ_t) = 
    \lim_{t\to\infty} |\calQ_t| \cdot \delta_0^d \cdot 2^{-t\cdot d}\;.
\end{align}
We get that the number of voxels $|\calQ_t|$ at each iteration $t$ is $\Theta\left(\calH^d(M) \cdot  \delta_0^{-d}\cdot 2^{t \cdot d}\right)$.

\bigskip

Finally, during the construction of the approximation of $M$ by the subdivision method, we have $\Theta(\log \delta_0 \delta^{-1})$ iterations, and for each iteration, we evaluate the predicate (that takes $\alpha_M$ time) on $|\calQ_t|$ voxels. Hence, the construction time is (note that $\delta_0^{-d}\cdot 2^{td} \leq \delta_0^{-d} \cdot \delta_0 ^d \cdot \delta^{-d}$):
\begin{align}
    \Theta\left(
        \alpha_M \cdot \calH^d(M) \cdot \delta^{-d} \cdot \log (\delta_0\delta^{-1})
    \right) \;.
\end{align}

We note that the complexity of the output, which are the leaves of the orthree, is $\Theta\left(\calH^d(M) \cdot \delta^{-d}\right)$.

\bigskip\noindent
\begin{remark}
    Recall that we denote the diameter of the bounding box $V_0$ of $M$ by $\delta_0$. Then, to create a dense voxel grid of resolution $\delta$, we need to split each axis into at most $\lceil \delta_0 / \delta \rceil$ intervals.
    In other words, a $\delta$-resolution dense voxel grid of the bounding box $V_0$ has $\Theta\left( (\delta_0 / \delta)^ n \right)$ voxels.
    
    To construct a voxel grid approximation of the fiber $M$, we must iterate over all voxels in the dense grid and evaluate the voxel intersection predicate. Hence, the construction time of the representation based on the dense grid is
        $\Theta\left(\alpha_M \cdot \delta_0 ^ n \cdot \delta ^{-n}\right)$.

    This complexity is asymptotically worse than the subdivision method, when the Hausdorff dimension of the fiber $M$ is less than $n$.

\end{remark}

\section{Conclusion}

In this short note, we have reviewed a popular method for approximating implicitly defined fibers and analyzed its construction time. 
We see that the subdivision method is substantially more efficient than dense grid based methods, as it is both output sensitive, depending on the measure of the result, and is only exponential in the Hausdorff dimension of the result, instead of the dimension of the ambient space $\bbR^n$, i.e., $\delta^{-d}$ instead of $\delta^{-n}$. 

\bibliographystyle{plain}
\bibliography{bibliography}

\begin{thebibliography}{10}

\bibitem{an_measuring_2022}
Weiqiang An, Chong Wang, Hua Zhang, and Zhenning Bi.
\newblock Measuring the {Formal} {Complexity} of {Architectural} {Curved} {Surfaces} {Based} on {3D} {Box}-{Counting} {Dimension}.
\newblock {\em Nexus Network Journal}, 24(4):753--766, 2022.

\bibitem{bilevich_sensor_2023}
Michael~M. Bilevich, Steven~M. LaValle, and Dan Halperin.
\newblock Sensor {Localization} by {Few} {Distance} {Measurements} via the {Intersection} of {Implicit} {Manifolds}.
\newblock In {\em 2023 {IEEE} {International} {Conference} on {Robotics} and {Automation} ({ICRA})}, pages 1912--1918, 2023.

\bibitem{boissonnat_meshing_2006}
Jean-Daniel Boissonnat, David Cohen-Steiner, Bernard Mourrain, Günter Rote, and Gert Vegter.
\newblock Meshing of {Surfaces}.
\newblock In Jean-Daniel Boissonnat and Monique Teillaud, editors, {\em Effective {Computational} {Geometry} for {Curves} and {Surfaces}}, pages 181--229. Springer, 2006.

\bibitem{botsch_polygon_2010}
Mario Botsch, Leif Kobbelt, Mark Pauly, Pierre Alliez, and Bruno Levy.
\newblock {\em Polygon {Mesh} {Processing}}.
\newblock CRC Press, 2010.

\bibitem{burr_complexity_2017}
Michael~A. Burr, Shuhong Gao, and Elias Tsigaridas.
\newblock The {Complexity} of an {Adaptive} {Subdivision} {Method} for {Approximating} {Real} {Curves}.
\newblock In {\em Proceedings of the 2017 {ACM} {International} {Symposium} on {Symbolic} and {Algebraic} {Computation}}, {ISSAC} '17, pages 61--68. Association for Computing Machinery, 2017.

\bibitem{campolattaro_quadtrees_2024}
Jackson Campolattaro, Simon Giraudot, Cédric Portaneri, Tong Zhao, and Pierre Alliez.
\newblock Quadtrees, {Octrees}, and {Orthtrees}.
\newblock In {\em {CGAL} {User} and {Reference} {Manual}}. CGAL Editorial Board, 6.0.1 edition, 2024.

\bibitem{chen_neural_2021}
Zhiqin Chen and Hao Zhang.
\newblock Neural marching cubes.
\newblock {\em ACM Trans. Graph.}, 40(6):251:1--251:15, 2021.

\bibitem{evans_measure_2018}
Lawrence~C. Evans.
\newblock {\em Measure {Theory} and {Fine} {Properties} of {Functions}}.
\newblock Routledge, 2018.

\bibitem{hsu_rods_2019}
Ching{-}Hsiang Hsu, Yi{-}Jen Chiang, and Chee Yap.
\newblock Rods and {R}ings: {S}oft {S}ubdivision {P}lanner for {R}{\^{}}3 x {S}{\^{}}2.
\newblock {\em CoRR}, abs/1903.09416, 2019.

\bibitem{krantz_implicit_2002}
Steven~George Krantz and Harold~R. Parks.
\newblock {\em The {Implicit} {Function} {Theorem}: {History}, {Theory}, and {Applications}}.
\newblock Springer Science \& Business Media, 2002.

\bibitem{lavalle_planning_2006}
Steven~M. LaValle.
\newblock {\em Planning {Algorithms}}.
\newblock Cambridge University Press, 2006.

\bibitem{lorensen_marching_1998}
William~E. Lorensen and Harvey~E. Cline.
\newblock Marching {C}ubes: {A} {H}igh {R}esolution {3D} {S}urface {C}onstruction {A}lgorithm.
\newblock In {\em Seminal {G}raphics: {P}ioneering {E}fforts that {S}haped the {F}ield, {Volume} 1}, volume Volume 1, pages 347--353. Association for Computing Machinery, 1998.

\bibitem{meagher_geometric_1982}
Donald Meagher.
\newblock Geometric {M}odeling {U}sing {O}ctree {E}ncoding.
\newblock {\em Computer Graphics and Image Processing}, 19(2):129--147, 1982.

\bibitem{mobasseri_new_1992}
B.G. Mobasseri.
\newblock A {N}ew {Q}uadtree {C}omplexity {T}heorem.
\newblock In {\em Proceedings., 11th {IAPR} {International} {Conference} on {Pattern} {Recognition}. {Vol}.{II}. {Conference} {B}: {Pattern} {Recognition} {Methodology} and {Systems}}, pages 389--392, 1992.

\bibitem{newman_survey_2006}
Timothy~S. Newman and Hong Yi.
\newblock A {S}urvey of the {M}arching {C}ubes {A}lgorithm.
\newblock {\em Computers \& Graphics}, 30(5):854--879, 2006.

\bibitem{nielson_dual_2004}
G.M. Nielson.
\newblock Dual {M}arching {C}ubes.
\newblock In {\em {IEEE} {Visualization} 2004}, pages 489--496, 2004.

\bibitem{samet_overview_1988}
Hanan Samet.
\newblock An {Overview} of {Quadtrees}, {Octrees}, and {Related} {Hierarchical} {Data} {Structures}.
\newblock In Rae~A. Earnshaw, editor, {\em Theoretical {Foundations} of {Computer} {Graphics} and {CAD}}, pages 51--68. Springer, 1988.

\bibitem{DBLP:books/aw/Samet90}
Hanan Samet.
\newblock {\em The Design and Analysis of Spatial Data Structures}.
\newblock Addison-Wesley, 1990.

\bibitem{wang_soft_2013}
Cong Wang, Yi-Jen Chiang, and Chee~K. Yap.
\newblock On {S}oft {P}redicates in {S}ubdivision {M}otion {P}lanning.
\newblock In {\em Proceedings of the {T}wenty-{N}inth {A}nnual {S}ymposium on {Computational} {G}eometry}, {SoCG} '13, pages 349--358. Association for Computing Machinery, 2013.

\bibitem{yap_soft_2015}
Chee~K. Yap.
\newblock Soft {Subdivision} {Search} in {Motion} {Planning}, {II}: {Axiomatics}.
\newblock In Jianxin Wang and Chee~K. Yap, editors, {\em Frontiers in {Algorithmics}}, pages 7--22. Springer International Publishing, 2015.

\bibitem{zhang_medical_2022}
Qi~Zhang.
\newblock Medical {Data} and {Mathematically} {Modeled} {Implicit} {Surface} {Real}-{Rime} {Visualization} in {Web} {Browsers}.
\newblock {\em International Journal of Image and Graphics}, 22(04):2250027, 2022.
\newblock Publisher: World Scientific Publishing Co.

\end{thebibliography}

\appendix

\section{Convergence of the Subdivision Method}
\label{sec:proof-convergence}

In this section we prove the convergence of the subdivision and dense voxel grid methods. That is, the approximation $\calQ_t$  of $M$ converge to $M$, up to some ``small'' set, i.e., a set of measure zero.

\begin{theorem}
    Define the limit of $\calQ_t$ as:
    \begin{align}
        \lim_{t\to\infty} \calQ_t \coloneqq \lim_{t \to \infty} \bigcup_{V_j \in \calQ_t} \Bar{V}_j
        =
        \bigcap_{t=0}^\infty \bigcup_{V_j \in \calQ_t} \Bar{V}_j\;.
    \end{align}
    Then $M \subseteq \lim_{t \to \infty} \calQ_t$, and the difference $\left(\left(\lim_{t \to \infty} \calQ_t \right) \setminus M\right)$ is a set of Lebesgue measure zero.
\end{theorem}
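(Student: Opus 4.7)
The plan is to track the nested union $K_t := \bigcup_{V_j \in \calQ_t} \bar V_j$ across iterations, establish two structural facts about it, and then combine them with a distance estimate.

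First I would verify monotonicity: $K_t \subseteq K_{t-1}$ for all $t \geq 1$. This is immediate from the algorithm, since every voxel in $\calQ_t$ is by construction a sub-voxel of some $V_k \in \calQ_{t-1}$ that was split, so its closure sits inside $\bar V_k$. Next I would prove $M \subseteq K_t$ by induction on $t$. The base case $M \subseteq \bar V_0 = K_0$ holds because $V_0$ is chosen as a bounding volume. For the inductive step, given $x \in M$, the hypothesis places $x$ in some $\bar V_j$ with $V_j \in \calQ_{t-1}$; but then $\bar V_j \cap M \neq \emptyset$, so the algorithm splits $V_j$ into $2^n$ half-open sub-voxels whose union equals $V_j$ and whose closures therefore cover $\bar V_j$. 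Hence $x$ lies in the closure of some sub-voxel in $\calQ_t$, i.e.\ $x \in K_t$. Intersecting over $t$ yields $M \subseteq \bigcap_t K_t = \lim_{t \to \infty} \calQ_t$, proving the first claim.

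For the second assertion, I would bound the distance from any $x \in K_t$ to $M$. Such an $x$ lies in some $\bar V_j \in \calQ_t$ whose parent $V_k \in \calQ_{t-1}$ contains a point $p \in M$; therefore $\|x - p\| \leq \mathrm{diam}(\bar V_k) = \delta_0 \cdot 2^{-(t-1)}$. Letting $t \to \infty$, any $x \in \bigcap_t K_t$ satisfies $d(x, M) = 0$, so $\bigcap_t K_t \subseteq \bar M$. Under the standard (and implicit) assumption that $f$ is continuous, the fiber $M = \fiber$ is closed, so $\bar M = M$ and the difference $\left(\lim_{t \to \infty} \calQ_t\right) \setminus M$ is in fact empty, which is trivially of Lebesgue measure zero.

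The main obstacle, should one wish to drop continuity of $f$, is controlling $\bar M \setminus M$: a Lebesgue-null set can have a closure of positive measure (for instance $\mathbb{Q} \cap [0,1]$), so additional regularity on $f$ or $M$ would be required to genuinely obtain a measure-zero rather than an emptiness statement. In the setting of this note — implicit manifolds arising as fibers of an evaluable, effectively smooth function — continuity is the natural minimal assumption to invoke, and I would state it once as a standing hypothesis at the start of the proof rather than decorating the statement with further caveats.
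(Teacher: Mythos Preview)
Your argument follows the same sandwich strategy as the paper: show $M \subseteq \bigcap_t K_t$ by induction on $t$, then show $\bigcap_t K_t \subseteq \bar M$ via a distance estimate, and conclude. The inductive containment and the convergence $p_t \to p$ are essentially identical in both write-ups; your version is slightly more careful in attributing the point of $M$ to the \emph{parent} voxel rather than the voxel itself, which matches the algorithm exactly.

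Where you genuinely diverge from the paper is the last step, and you are right to. The paper closes by asserting that ``any set and its closure differ at a set of measure zero'' and hence $\bar M \setminus M$ is null. That claim is false in general, as your $\mathbb{Q}\cap[0,1]$ example shows. Your fix---assume $f$ continuous so that $M=f^{-1}(\{y\})$ is closed, whence $\bar M = M$ and the difference is empty---is the correct way to finish, and in the setting of the note (evaluable, effectively smooth $f$) it is the natural standing hypothesis. So your proposal is not merely an alternative route; it patches a real gap in the paper's final sentence while keeping the overall architecture unchanged.
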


\bigskip

\begin{proof}
    To prove this theorem, we will show that indeed the fiber $M$ is contained in the limit. Then, we show that the limit is contained in the closure of $M$, concluding that the difference between the limit of the approximations $\calQ_t$ and $M$ is of 
    Lebesgue measure zero, as any set and its closure differ at a set of measure zero.

    \smallskip

    First, assume we have $p \in M$. Assuming $V_0$ is a bounding box of $M$, by induction, for every $t$, there is some voxel $V_{j_t} \in \calQ_t$ such that $p\in \Bar{V}_{j_t}$. Hence for all $t$, $p \in \bigcup_{V_j \in \calQ_t} \Bar{V}_j$, and $p \in \lim_{t\to \infty} \calQ_t$.

    \smallskip

    Now, assume we have $p \in \bigcap_{t=0}^\infty \bigcup_{V_j \in \calQ_t} \Bar{V}_j$. Then for all $t\in\bbN$, there is some voxel $V_t$ such that the intersection of its closure $\Bar{V}_t$ with $M$ is non empty. Hence there is a sequence of points on $M$, $p_1,p_2,\dots \in M$ such that $p_t \in \Bar{V}_t$. For each $t$, $||p_t - p|| \leq \mathrm{diam} V_t = \delta_0 2^{-t}$. Overall, we get that this sequence $p_t$ converges to the point $p$, and by definition of closure, $p \in \Bar{M}$ with $\Bar{M}$ is the closure of $M$.

    \smallskip
    
    Finally $M \subseteq \lim_{t\to\infty} \calQ_t \subseteq \Bar{M}$, and $\left(\lim_{t\to\infty} \calQ_t\right) \setminus M \subseteq \Bar{M} \setminus M$, hence it is of measure zero.
\end{proof}

\end{document}